\documentclass[aps,nopacs,nokeys,superscriptaddress,11pt,twoside]{revtex4}

\usepackage{graphicx,epic,eepic,epsfig,amsmath,latexsym,amssymb,verbatim,revsymb}

\usepackage{theorem}
\newtheorem{definition}{Definition}

\newtheorem{theorem}[definition]{Theorem}
\newtheorem{corollary}[definition]{Corollary}

\def\squareforqed{\hbox{\rlap{$\sqcap$}$\sqcup$}}
\def\qed{\ifmmode\squareforqed\else{\unskip\nobreak\hfil
\penalty50\hskip1em\null\nobreak\hfil\squareforqed
\parfillskip=0pt\finalhyphendemerits=0\endgraf}\fi}
\def\endenv{\ifmmode\;\else{\unskip\nobreak\hfil
\penalty50\hskip1em\null\nobreak\hfil\;
\parfillskip=0pt\finalhyphendemerits=0\endgraf}\fi}
\newenvironment{proof}{\noindent \textbf{{Proof~} }}{\qed}
\newenvironment{remark}{\noindent \textbf{{Remark~}}}{}

\mathchardef\ordinarycolon\mathcode`\:
\mathcode`\:=\string"8000
\def\vcentcolon{\mathrel{\mathop\ordinarycolon}}
\begingroup \catcode`\:=\active
  \lowercase{\endgroup
  \let :\vcentcolon
  }

\newcommand{\nc}{\newcommand}
\nc{\rnc}{\renewcommand}
\nc{\beq}{\begin{equation}}
\nc{\eeq}{{\end{equation}}}
\nc{\beqa}{\begin{eqnarray}}
\nc{\eeqa}{\end{eqnarray}}
\nc{\lbar}[1]{\overline{#1}}
\nc{\bra}[1]{\langle#1|}
\nc{\ket}[1]{|#1\rangle}
\nc{\ketbra}[2]{|#1\rangle\!\langle#2|}
\nc{\braket}[2]{\langle#1|#2\rangle}
\nc{\proj}[1]{| #1\rangle\!\langle #1 |}
\nc{\avg}[1]{\langle#1\rangle}
\rnc{\max}{\operatorname{max}}
\nc{\Rank}{\operatorname{Rank}}
\nc{\smfrac}[2]{\mbox{$\frac{#1}{#2}$}}
\nc{\tr}{\operatorname{Tr}}
\nc{\ox}{\otimes}
\nc{\dg}{\dagger}
\nc{\dn}{\downarrow}
\nc{\cA}{{\cal A}}
\nc{\cB}{{\cal B}}
\nc{\cC}{{\cal C}}
\nc{\cD}{{\cal D}}
\nc{\cE}{{\cal E}}
\nc{\cF}{{\cal F}}
\nc{\cG}{{\cal G}}
\nc{\cH}{{\cal H}}
\nc{\cI}{{\cal I}}
\nc{\cJ}{{\cal J}}
\nc{\cK}{{\cal K}}
\nc{\cL}{{\cal L}}
\nc{\cM}{{\cal M}}
\nc{\cN}{{\cal N}}
\nc{\cO}{{\cal O}}
\nc{\cP}{{\cal P}}
\nc{\cR}{{\cal R}}
\nc{\cS}{{\cal S}}
\nc{\cT}{{\cal T}}
\nc{\cX}{{\cal X}}
\nc{\cZ}{{\cal Z}}
\nc{\csupp}{{\operatorname{csupp}}}
\nc{\qsupp}{{\operatorname{qsupp}}}
\nc{\var}{\operatorname{var}}
\nc{\rar}{\rightarrow}
\nc{\lrar}{\longrightarrow}
\nc{\polylog}{\operatorname{polylog}}
\nc{\1}{{\openone}}

\def\a{\alpha}
\def\b{\beta}

\nc{\RR}{{{\mathbb R}}}
\nc{\CC}{{{\mathbb C}}}
\nc{\FF}{{{\mathbb F}}}
\nc{\NN}{{{\mathbb N}}}
\nc{\ZZ}{{{\mathbb Z}}}
\nc{\PP}{{{\mathbb P}}}
\nc{\QQ}{{{\mathbb Q}}}
\nc{\UU}{{{\mathbb U}}}
\nc{\EE}{{{\mathbb E}}}
\nc{\id}{{\operatorname{id}}}

\nc{\be}{\begin{equation}}
\nc{\ee}{{\end{equation}}}
\nc{\bea}{\begin{eqnarray}}
\nc{\eea}{\end{eqnarray}}
\nc{\<}{\langle}
\rnc{\>}{\rangle}
\nc{\Hom}[2]{\mbox{Hom}(\CC^{#1},\CC^{#2})}
\nc{\rU}{\mbox{U}}

\nc{\ob}[1]{#1}

\newcommand{\norm}[1]{\lVert#1\rVert}

\newcommand{\cocon}[1]{\overline{#1}}



\newcommand{\hull}{{\operatorname{semi-lin}\,}}
\newcommand{\tLambda}{{\widetilde\Lambda}}
\newcommand{\cptn}{{c.p.t.$\leq${}}}

\begin{document}

\title{Non-malleable encryption of quantum information}

\author{Andris Ambainis}
\affiliation{Department of Computer Science, University of Latvia, Raina bulv. 19, Riga,
LV-1586, Latvia}
\affiliation{Department of Combinatorics and Optimization \&{} Institute for Quantum Computing, University of Waterloo}

\author{Jan Bouda}
\affiliation{Faculty of Informatics, Masaryk University, Botanick\'{a} 68a, 602\,00 Brno, Czech Republic}

\author{Andreas Winter}
\affiliation{Department of Mathematics, University of Bristol, Bristol BS8 1TW, U.K.}
\affiliation{Centre for Quantum Technologies, National University of Singapore,
 3 Science Drive 2, Singapore 117543}

\date{3 February 2009}

\begin{abstract}
We introduce the notion of \emph{non-malleability} of a
quantum state encryption scheme (in dimension $d$):
in addition to the requirement that an adversary
cannot learn information about the state, here we demand that no
controlled modification of the encrypted state can be effected.

We show that such a scheme is equivalent to a \emph{unitary 2-design}
[Dankert \emph{et al.}], as opposed to normal encryption which is a
unitary 1-design. Our other main results include a new proof of the lower
bound of $(d^2-1)^2+1$ on the number of unitaries in a 2-design
[Gross \emph{et al.}], which lends itself to a generalization to
approximate 2-design.
Furthermore, while in prime power dimension there is a unitary 2-design
with $\leq d^5$ elements, we show that there are always approximate
2-designs with $O(\epsilon^{-2} d^4 \log d)$ elements.
\end{abstract}

\maketitle

\section*{INTRODUCTION}

The ordinary (and in terms of secret key length, optimal) encryption of quantum
states on $n$ qubits is by applying a randomly chosen tensor product
of Pauli operators (including the identity).
This requires $2n$ bits of shared secret randomness, corresponding to the
$4^n$ Pauli operators.
(More generally, for states on a $d$-dimensional system, one can use
the elements of the discrete Weyl group -- up to global phases -- of
which there are $d^2$.)
This is perfectly secure in the sense that the state the adversary
can intercept is, without her knowing the key, always the maximally mixed state.
For perfectly secure encryption with random unitaries, it was shown
in~\cite{Ambainis+Mosca...-Priva_quant_chann:2000}
that $2n$ bits of secret key are also necessary
for $n$ qubits.
The lower bound of $2$ bits of key per qubit continues to hold even
for $\epsilon$-approximate encryption (up to expressions in $\epsilon$),
but there it becomes relevant how the approximation is defined
--- whether it randomizes entangled states or not [see
Eq.~\eqref{eq:enc-scheme-2-correct} and \eqref{eq:enc-scheme-2-naive} below].
In~\cite{Hayden+Leung...-Rando_quant_state:2003} it was shown that in the
latter case one gets away with $n+o(n)$ key bits for arbitrary $n$-qubit
states; their construction was derandomized later
in~\cite{Ambainis.Smith-Smallpseudo-randomfamilies-2004}
and~\cite{Dickinson.Nayak-ApproximateRandomizationof-2006}.

However, even perfectly secure encryption allows for a different
sort of intervention by the adversary: she can, without ever attempting
to learn the message, change the plaintext
by effecting certain dynamics on the encrypted state. Consider briefly
the classical one-time pad, i.e. an $n$-bit message XORed with a random
$n$-bit string: by flipping a bit of the ciphertext, an adversary
can effectively flip any bit of the recovered plaintext.
In the quantum case, due to the (anti-)commutation relations of the
Pauli operators, by applying to the ciphertext (encrypted state)
some Pauli, she forces that the decrypted state is the plaintext modified
by that Pauli: for an $n$-qubit state $\ket{\varphi}$, any adversary's
Pauli operator $Q$ and secret key Pauli $P_k$, the decrypted state is
\[
  P_k^\dagger Q P_k \ket{\varphi} = \zeta Q\ket{\varphi},
\]
with some (unimportant) global phase $\zeta = \zeta(P,Q)$.

This is evidently an undesirable property of a encryption scheme,
and can be classically addressed e.g. by authenticating the message as well as
encrypting it. Interestingly, in the above quantum message case,
it was shown in~\cite{Barnum+Crepeau...-Authenticatio_of_q_mes:2002}
that authenticating quantum messages is
at least as expensive as encrypting them (it actually encrypts
the message as well): one needs $2$ bits of shared
secret key for each qubit authenticated, even in the approximate setting
considered in~\cite{Barnum+Crepeau...-Authenticatio_of_q_mes:2002}. 
Classical non-malleable cryptosystems include both symmetric and 
asymmetric encryption schemes, bit commitment, zero knowledge proofs 
and others~\cite{DDN01}.

Here we will introduce a formal definition of perfect non-malleability of
a quantum state encryption scheme (NMES), i.e. resistance against predictable
modification of the plaintext,
as well as of two notions of approximate encryption with approximate
non-malleability. We show that a unitary non-malleable channel
is equivalent to unitary $2$-design in the sense of Dankert
\emph{et al.}~\cite{Dankert.Cleve.ea-ExactandApproximate-2006}.
We use this fact to design an
exact ideal non-malleable encryption scheme requiring $5\log d$ bits of key.
Also, the lower bound of
Gross \emph{et al.}~\cite{Gross.Audenaert.ea:Evenlydistributedunitaries:-2007}
for unitary $2$-designs applies for perfect NMES;
we give a new proof of their result that at least
$(d^2-1)^2+1$ unitaries are required, which also yields a
more general lower bound of $(4-O(\epsilon)) \log d$
on the \emph{entropy} of an approximate unitary $2$-design.
Finally we demonstrate that approximate NMES (unitary 2-designs) exist
which require only $4\log d+\log\log d+ O(\log 1/\epsilon)$ bits of key.

\section{General Model of Encryption}

Suppose Alice wants to send a secret quantum message to Bob, say an arbitrary state
$\rho \in \cB(\cH)$, a Hilbert space of dimension $d$. For this purpose they
will use a encryption scheme with pre-shared secret key $K$ as follows.
$K$ is distributed according to some probability distribution $p_K(k)$
and for each $k$ there is a pair of c.p.t.p.~(completely positive and trace
preserving) maps
\[
  E_k:{\cal B}({\cal H}) \longrightarrow {\cal B}({\cal H}')
   \text{ and }
  D_k:{\cal B}({\cal H}') \longrightarrow {\cal B}({\cal H})
\]
for encryption and decryption. The combined effect of en- and decryption, averaged over all
keys, is described by a c.p.t.p. map (noisy quantum channel)
$R:{\cal B}({\cal H}) \longrightarrow {\cal B}({\cal H})$,
acting on operators on ${\cal H}$ as
\[
  R(\rho) = \sum_k p_K(k) D_k\bigl( E_k(\rho) \bigr).
\]
Similarly, for an adversary who intercepts the encrypted state but doesn't know the
secret key, we have an average channel
$R':{\cal B}({\cal H}) \longrightarrow {\cal B}({\cal H}')$,
\[
  R'(\rho) = \sum_k p_K(k) E_k(\rho).
\]
Loosely speaking, the quality of the scheme is described by two parameters:
first, the reliability, i.e.~how close $R$ is to the ideal channel; secondly,
the secrecy, i.e.~how close $R'$ is to a constant (meaning a map taking all
input states to a fixed output state).
In an ideal scheme, $R=\id$ and $R'=\text{const.}$, i.e.~there is a
state $\xi_0$ on ${\cal H}'$, such that
\begin{align}
  \label{eq:ideal-enc-scheme-1}
  \forall \rho &\quad R(\rho) = \rho, \\
  \label{eq:ideal-enc-scheme-2}
  \forall \rho &\quad R'(\rho) = \xi_0.
\end{align}

The issue of approximate performance is a little bit tricky: whereas
for the reliability of communication there is essentially one
notion, namely, for $\delta > 0$,
\begin{equation}
  \label{eq:enc-scheme-1}
  \forall \rho \quad  \norm{\rho - R(\rho)}_1 \le \delta,
  \tag{1'}
\end{equation}
there are two asymptotically radically different notions of secrecy.
One is the ``naive'' one
\begin{equation}
  \label{eq:enc-scheme-2-naive}
  \forall \rho \quad  \bigl\| R'(\rho) - \xi_0 \bigr\|_1 \leq \epsilon
  \tag{2'}
\end{equation}
that does not randomize entangled states when applied locally.

The ``correct'' (composable!) definition takes into account the possibility to
apply $R'$ to part of an entangled state:
\begin{equation}
  \label{eq:enc-scheme-2-correct}
  \forall \rho_{12} \quad
            \bigl\| (R'\ox\id)\rho_{12} - \xi_0\ox\rho_2 \bigr\|_1 \leq \epsilon.
  \tag{2''}
\end{equation}
We note that the two conditions coincide in the ideal case $\epsilon=0$.

The minimal key length required for (approximate) encryption reflects
whether Eq.~\eqref{eq:enc-scheme-2-naive} or Eq.~\eqref{eq:enc-scheme-2-correct}
is used. In the former case $\log d$ bits of key are necessary,
and $\log d+o(\log d)$ bits of key are
sufficient~\cite{Hayden+Leung...-Rando_quant_state:2003,Ambainis.Smith-Smallpseudo-randomfamilies-2004}
to randomize quantum system of dimension $d$, while in the latter case
the key length essentially coincides with the exact encryption case and equals
$(2-O(\epsilon))\log d$~\cite{Ambainis+Mosca...-Priva_quant_chann:2000}.

\section{Non-malleability}

There is, of course, a simple scheme of encryption that implements an
ideal scheme: on $n$ qubits, use a key of length $2n$ and apply an independent
random Pauli operator to each qubit. (More generally, in dimension $d$,
the key identifies one of the $d^2$ discrete Weyl operators
made up of the basis shift and phase shift operators.)
The adversary evidently cannot see any information about the plaintext
state, but she can use the ciphertext in another way: by modulating the ciphertext with an arbitrary
Pauli operation, she can effectively implement this Pauli transformation
on the plaintext state.

We shall show that this is not at all a necessary feature of any encryption
scheme. There are, however, always two possible actions for the adversary (and their
arbitrary convex combination). Namely, not to interfere at all, resulting in
correct decryption of the state $\rho$ sent; or interception
of the ciphertext and its replacement by a state $\eta_0$ on ${\cal H}'$,
resulting in Bob always decrypting the constant state
$\rho_0 = \sum_k p_K(k) D_k(\eta_0)$.
In other words, assuming the adversary implements an arbitrary quantum
channel, i.e.~a completely positive and trace non-increasing (\cptn) map
$\Lambda:{\cal B}({\cal H}') \longrightarrow {\cal B}({\cal H}')$, the class of
\emph{effective channels} on the plaintext she can realize, namely all channels
\begin{equation*}\begin{split}
  \tLambda: {\cal B}({\cal H}) &\longrightarrow {\cal B}({\cal H}) \text{ s.t.}\\
                \rho &\longmapsto     \sum_k p_K(k) D_k\Bigl( \Lambda\bigl( E_k(\rho) \bigr) \Bigr),
\end{split}\end{equation*}
will include all convex combinations of the identity (up to approximation
as specified by $\epsilon$) and the completely forgetful channels $\langle \rho_0 \rangle$
mapping all inputs to the state $\rho_0 = \sum_k p_K(k) D_k(\eta_0)$, with arbitrary $\eta_0$.

We call an encryption scheme \emph{(perfectly) non-malleable}, if these are the only
effective channels the adversary can realize, i.e.~if for every $\Lambda$,
$\tLambda$ is in the semi-linear span of $\id$ and the $\langle \rho_0 \rangle$,
\begin{equation}
  \label{eq:ideal-tres-3}
  \tLambda \in {\cal C} :=
  \hull\left( \{\id\} \cup
                    \left\{ \langle \rho_0 \rangle : \rho \mapsto \rho_0
                                                           = \sum_k p_K(k) D_k(\eta_0) \right\}
             \right),
\end{equation}
with $\hull$ being the semi-linear hull, i.e. with any family of
elements it also contains all their linear combinations,
subject to complete positivity of the resulting operator.
[Clearly, in the above the convex hull can be realized by an adversary; however,
in general the full semi-linear hull is accessible; e.g.~for the Haar measure
on the unitary group -- and infinite key -- the only constant channel is
$\langle \tau \rangle$, with the maximally mixed state $\tau = \frac{1}{d}\1$,
cf.~the beginning of the next section, in
particular eqs.~(\ref{eq:channels})--(\ref{eq:semilinear-example})
On the other hand, any traceless unitary by the adversary results in the 
effective channel $\tLambda(\rho) = \frac{1}{d^2-1}\left(d^2\tau-\rho\right)$.]

Also, a word on why we demand this for all \cptn\ maps, which is a strictly
larger class than c.p.t.p.: note that the adversary could implement an
\emph{instrument}~\cite{DaviesLewis:operational}, which is a resolution
of a c.p.t.p.~map into \cptn\ ones. One of them will act randomly, but
the adversary can learn which one, so could effectively correlate herself
with the effective channel $\tLambda$.

As before, this is to be understood up to approximations: for every effective
channel $\tLambda$ there is $\Theta \in {\cal C}$ such that
\begin{equation}
  \label{eq:tres-3-naive}
  \forall \rho \quad
            \bigl\| \tLambda(\rho) - \Theta(\rho) \bigr\|_1 \leq \theta.
  \tag{3'}
\end{equation}
However, again the ``correct'' (composable) definition has to take into account the possibility
of applying the effective channels to part of an entangled state:
\begin{equation}
  \label{eq:tres-3-correct}
  \forall \rho_{12} \quad
            \bigl\| (\tLambda\ox\id)\rho_{12}
                         - (\Theta\ox\id)\rho_{12} \bigr\|_1 \leq \theta.
  \tag{3''}
\end{equation}
We call the scheme \emph{strictly non-malleable}, if Eq.~(\ref{eq:ideal-tres-3})
or (\ref{eq:tres-3-naive}) or (\ref{eq:tres-3-correct}) holds for some set
${\cal C}' = \hull\bigl\{ \id, \langle\rho_0\rangle \bigr\}$ instead of ${\cal C}$.
(In other words, there is essentially only one constant channel in ${\cal C}$,
independent of $\eta_0$.)
Perfect non-malleability then corresponds to $\theta = 0$, in either
Eq.~(\ref{eq:tres-3-naive}) or (\ref{eq:tres-3-correct})

\section{Main Results}

In this paper we restrict ourselves to the ``minimal''
case, when ${\cal H}' = {\cal H}$ is a $d$-dimensional Hilbert
space, and to perfect transmission,
i.e.~Eq.~(\ref{eq:ideal-enc-scheme-1}).
This entails that $E_k$ is
conjugation by a unitary $U_k$, while $D_k$ is simply the inverse,
i.e.~conjugation by $U_k^\dagger$:
\[
  E_k(\rho) = U_k \rho U_k^\dagger,\quad
  D_k(\sigma) = U_k^\dagger \sigma U_k.
\]

Since convex combinations of unitary conjugation channels are unital,
in an encryption scheme all input states are encrypted as
the maximally mixed state $\xi_0 = \tau := \frac{1}{d}\1$ in
Eqs.~(\ref{eq:ideal-enc-scheme-2}), (\ref{eq:enc-scheme-2-naive})
and (\ref{eq:enc-scheme-2-correct}).
(For a more general discussion
see~\cite{BoudaZiman:Optimalityofprivate-2007}.)
This means that the adversary can always implement channels
\begin{equation}
  \label{eq:channels}
  \Theta \in \cC' = \hull\{ \id, \langle \tau \rangle \},
\end{equation}
where $\langle \tau \rangle$ is the completely depolarizing channel.
Conversely, we demand that these are the only ones she can
achieve: for every \cptn\ map $\Lambda$, we demand that the effective
channel $\tLambda \in \cC'$, with
\[
  \tLambda(\rho) = \sum_k p_K(k) U_k^\dagger \bigl( \Lambda(U_k \rho U_k^\dagger) \bigr) U_k.
\]

This can be conveniently re-expressed using the Choi-Jamio\l{}kowski
operators~\cite{Choi:matrix,Jamiolkowski-Lineartransformationswhich-1972}:
for the maximally entangled state
$\Phi_d = \frac{1}{d}\sum_{i,j=0}^{d-1}\ket{ii}\!\bra{jj}$ on two systems
labelled $1$ and $2$, let $\omega = J_\Lambda := (\Lambda \otimes \id)\Phi_d$.
Note that $\tr J_\Lambda \leq 1$ and that $\Lambda$ can be be recovered
from the Choi-Jamio\l{}kowski operator as follows:
\begin{equation}
  \label{eq:CJ-inverse}
  \Lambda(\rho) = d \tr_2\bigl( (\1\otimes\rho^\top)J_\Lambda \bigr),
\end{equation}
where $\rho^\top$ is the transpose operator of $\rho$ with respect to
the basis $\{ \ket{i} \}_{i=0}^{d-1}$.
The image of the set $\cC'$ under the Choi-Jamio\l{}kowski isomorphism
is the set of bipartite positive operators
\begin{equation}
  \label{eq:semilinear-example}
  (\cC' \otimes \id)\Phi_d = \hull\{ \Phi_d, \tau\otimes\tau \}
                           = \RR_{\geq 0}\Phi_d + \RR_{\geq 0}(\1-\Phi_d) =: \cI,
\end{equation}
which are (up to normalization) just the so-called \emph{isotropic states}.
Note that these are exactly the (semidefinite) operators invariant
under conjugation with $U\ox\cocon{U}$, and that integration over the
Haar measure ${\rm d}U$ implements the projection into $\cI$: for
every operator $X$,
\begin{equation}
  \label{eq:iso-twirl}
  \int {\rm d}U (U\ox\cocon{U}) X (U\ox\cocon{U})^\dagger = \a\Phi_d + \b(\1-\Phi_d),
  \text{ with }
  \a = \tr X\Phi_d,\ \b=\frac{1}{d^2-1}\tr X(\1-\Phi_d).
\end{equation}
The c.p.t.p.~mapping from $X$ to the above average is known as the $U\ox\cocon{U}$-twirl,
denoted $\cT_{U\ox\cocon{U}}$.

On the other hand, exploiting the symmetry
$\Phi_d = (U\ox\cocon{U})\Phi_d(U\ox\cocon{U})^\dagger$,
we can write the Choi-Jamio\l{}kowski operator of the effective channel,
\[\begin{split}
  \widetilde\omega &= (\tLambda \otimes \id)\Phi_d \\
                   &= \sum_k p_K(k) (U_k\ox\1)^\dagger
                                      \Bigl[ (\Lambda\ox\id)
                                              \bigl( (U\ox\1)\Phi_d(U\ox\1)^\dagger\bigr) \Bigr]
                                   (U_k\ox\1)      \\
                   &= \sum_k p_K(k) (U_k\ox\1)^\dagger
                                      \Bigl[ (\Lambda\ox\id)
                                        \bigl( (\1\ox U_k^\top)\Phi_d(\1\ox U_k^\top)^\dagger\bigr) \Bigr]
                                   (U_k\ox\1)      \\
                   &= \sum_k p_K(k) (U_k\ox\cocon{U_k})^\dagger
                                      \bigl[ (\Lambda\ox\id) \Phi_d \bigr]
                                    (U_k\ox\cocon{U_k}) \\
                   &= \sum_k p_K(k) (U_k\ox\cocon{U_k})^\dagger \omega (U_k\ox\cocon{U_k})
                    =: \cT(\omega),
\end{split}\]
where $\cT$ is manifestly a c.p.t.p.~map.
The condition that $\{ p_K(k), U_k \}$ forms a perfect
NMES is now concisely expressed as $\cT = \cT_{U\ox\cocon{U}}$.

This is precisely the condition for a so-called \emph{unitary 2-design}
\cite{Dankert.Cleve.ea-ExactandApproximate-2006}, see also
\cite{Gross.Audenaert.ea:Evenlydistributedunitaries:-2007}.
Note that modulo a partial transpose, the $U\ox\cocon{U}$-twirl
is equivalent to the more familiar $U\ox U$-twirl
\[
  \cT_{U\ox U}(X) = \int {\rm d}U (U\ox U) X (U\ox U)^\dagger
                  = \a F + \b (\1-F),
\]
with the swap (or flip) operator $F = \sum_{i,j=0}^{d-1} \ket{ij}\!\bra{ji}$,
mapping density operators to \emph{Werner states}~\cite{Werner-QuantumStateswith-1989}.
Thus we have proved,
\begin{theorem}
  \label{thm:TRES-is-2design}
  Every perfect non-malleable encryption scheme is a unitary $2$-design.
  \qed
\end{theorem}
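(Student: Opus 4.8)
The plan is to read the statement off the Choi--Jamio\l{}kowski reformulation already assembled above, upgrading the defining inclusion of non-malleability to the operator identity $\cT=\cT_{U\ox\cocon{U}}$, which is by definition the unitary $2$-design condition of Dankert \emph{et al.} First I would recall that in the minimal, perfectly reliable case each $E_k$ is conjugation by a unitary $U_k$, so the Choi--Jamio\l{}kowski operator of the effective channel is $\widetilde\omega=\cT(\omega)$ with $\cT(X)=\sum_k p_K(k)(U_k\ox\cocon{U_k})^\dagger X(U_k\ox\cocon{U_k})$, exactly as derived. Perfect non-malleability---$\tLambda\in\cC'$ for every \cptn\ map $\Lambda$---then amounts, through Eq.~\eqref{eq:semilinear-example}, to the requirement that $\cT(\omega)\in\cI$ for every admissible $\omega=J_\Lambda$.

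Next I would remove the restriction to admissible $\omega$ and pin down $\cT$ on all operators. The Choi--Jamio\l{}kowski operators of \cptn\ maps are the positive semidefinite $\omega$ obeying a marginal bound, and after rescaling these span the whole (real) space of Hermitian operators on the two-system space, since any Hermitian operator is a difference of positives each of which can be shrunk to meet the trace constraint. As $\cT$ is linear, $\cT(\omega)\in\cI$ for all admissible $\omega$ forces $\operatorname{im}\cT$ to lie in the span of $\Phi_d$ and $\1-\Phi_d$. Writing $\cT(X)=a(X)\Phi_d+b(X)(\1-\Phi_d)$ and invoking the invariance $(U_k\ox\cocon{U_k})\Phi_d(U_k\ox\cocon{U_k})^\dagger=\Phi_d$ used above, the first coefficient is computed directly, $a(X)=\tr\bigl(\cT(X)\Phi_d\bigr)=\sum_k p_K(k)\tr(X\Phi_d)=\tr(X\Phi_d)$; the second is then fixed by trace preservation of the unital twirl $\cT$, giving $b(X)=\frac{1}{d^2-1}\tr\bigl(X(\1-\Phi_d)\bigr)$. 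These are exactly the coefficients $\a,\b$ of Eq.~\eqref{eq:iso-twirl}, so $\cT=\cT_{U\ox\cocon{U}}$, as required.

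The only genuinely non-formal point, and where I expect an objection, is the passage from $\operatorname{im}\cT\subseteq\cI$ to the full linear identity $\cT=\cT_{U\ox\cocon{U}}$: a priori $\cT$ might be an oblique projection onto $\cI$ with the ``wrong'' kernel, agreeing with the twirl only on where it lands but not as a map. This is settled by the short coefficient computation above, in which $\Phi_d$-invariance determines $a(X)$ and trace preservation determines $b(X)$, leaving no freedom and requiring no appeal to orthogonality of the projection. Everything else is bookkeeping already carried out in deriving $\widetilde\omega=\cT(\omega)$, so the theorem follows.
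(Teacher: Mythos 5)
Your proposal is correct and follows essentially the same route as the paper: the paper's proof of Theorem~\ref{thm:TRES-is-2design} is precisely the Choi--Jamio\l{}kowski derivation preceding it, which ends by asserting that perfect non-malleability is ``concisely expressed as'' $\cT=\cT_{U\ox\cocon{U}}$. Your additional argument --- that rescaled Choi operators of \cptn\ maps span all Hermitian operators, so $\operatorname{im}\cT$ lies in the span of $\Phi_d$ and $\1-\Phi_d$, with both coefficients then pinned down by $\Phi_d$-invariance and trace preservation --- correctly supplies the justification for that final step, which the paper states without explicit proof.
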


\begin{corollary}
  \label{cor:TRES-implies-encryption}
  Any perfect non-malleable encryption scheme, i.e., an ensemble of
  unitaries $\{ p_K(k), U_k \}$ satisfying $\tLambda \in \cC'$, is automatically
  an ideal encryption scheme, i.e.~Eq.~(\ref{eq:ideal-enc-scheme-2})
  holds.
\end{corollary}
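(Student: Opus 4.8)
The plan is to show that a perfect NMES (a unitary $2$-design) satisfies the ideal encryption condition $R'(\rho) = \tau$ for all $\rho$, by exploiting the $2$-design property at the level of the Choi--Jamio\l{}kowski picture already set up in the excerpt. The key observation is that encryption secrecy is a statement about the \emph{first} moments of the ensemble $\{p_K(k),U_k\}$, whereas the $2$-design condition $\cT = \cT_{U\ox\cocon{U}}$ controls the \emph{second} moments; so morally a $2$-design should be a $1$-design as well, and I would aim to extract the $1$-design fact from the $2$-design fact.

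First I would recall that an ideal encryption scheme is exactly a unitary $1$-design, i.e.~the condition that $R'(\rho) = \sum_k p_K(k) U_k \rho U_k^\dagger = \tau$ for every input $\rho$, equivalently $\sum_k p_K(k)(U_k\ox\cocon{U_k})$ (or the associated twirl on a single copy) reproduces the single-copy Haar average, which sends every $\rho$ to the maximally mixed state $\tau = \frac1d\1$. The natural route is to derive this single-copy averaging from the two-copy averaging. Concretely, I would take the established identity $\cT(\omega) = \cT_{U\ox\cocon{U}}(\omega)$ for \emph{all} operators $\omega$ on systems $1$ and $2$, and specialize the input $\Lambda$ to a channel whose Choi--Jamio\l{}kowski operator $\omega = J_\Lambda$ encodes the action of $R'$ on a single system. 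Then I would trace out the appropriate subsystem to collapse the two-copy statement to the desired one-copy statement.

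The cleanest concrete step: choose the adversary's map $\Lambda$ to be the identity channel, so that $\omega = \Phi_d$. The $2$-design condition forces $\cT(\Phi_d) = \cT_{U\ox\cocon{U}}(\Phi_d) = \Phi_d$ by Eq.~(\ref{eq:iso-twirl}), which merely says the identity is a fixed point and is not yet secrecy. So instead I would run the argument through the full structure of $\cC'$: since $\tLambda \in \cC' = \hull\{\id,\langle\tau\rangle\}$ for every $\Lambda$, I would pick $\Lambda = \langle \eta_0\rangle$ a constant (replacement) channel with arbitrary output $\eta_0$; the corresponding effective channel is the completely forgetful channel $\langle \rho_0\rangle$ with $\rho_0 = \sum_k p_K(k) U_k^\dagger \eta_0 U_k$. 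Membership $\langle\rho_0\rangle \in \cC'$ together with the explicit form of $\cI$ in Eq.~(\ref{eq:semilinear-example}) forces $\rho_0 = \tau$ for \emph{every} choice of $\eta_0$. Reading this back through $R'$: taking $\eta_0 = E_{k'}(\rho) = U_{k'}\rho U_{k'}^\dagger$ for a fixed plaintext shows, after averaging, that the secrecy output $R'(\rho)$ must coincide with $\tau$, which is precisely Eq.~(\ref{eq:ideal-enc-scheme-2}).

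I expect the main obstacle to be the bookkeeping that converts the constraint ``$\rho_0 = \tau$ for all replacement states $\eta_0$'' into the single-copy averaging statement $\sum_k p_K(k) U_k^\dagger \eta_0 U_k = \tau$, and then into $R'(\rho) = \tau$. The subtlety is that $\langle\rho_0\rangle$ being in the semi-linear hull of $\{\id,\langle\tau\rangle\}$ pins down its Choi--Jamio\l{}kowski operator only up to the two-dimensional span $\RR_{\geq0}\Phi_d + \RR_{\geq0}(\1-\Phi_d)$; I must verify that the \emph{constant} channel component is forced to have output exactly $\tau$ rather than some other state, which is where the invariance of $\cI$ under $U\ox\cocon{U}$ and the rank-one structure of $\Phi_d$ do the work. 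Once $\eta_0$ ranges over all inputs, this yields unitality of $R'$ as a map, i.e.~the full $1$-design property, completing the proof that perfect non-malleability implies ideal encryption.
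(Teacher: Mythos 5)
Your core reduction is sound, and it is a genuinely different route from the paper's: the paper proves this corollary by invoking Theorem~\ref{thm:TRES-is-2design} together with the standard fact that every unitary $2$-design is a unitary $1$-design, whereas you work directly from the non-malleability condition, feeding constant adversary channels $\Lambda = \langle\eta_0\rangle$ into it. That part is correct: the effective channel is then $\tLambda = \langle\rho_0\rangle$ with $\rho_0 = \sum_k p_K(k)\, U_k^\dagger \eta_0 U_k$, and membership $\langle\rho_0\rangle \in \cC' = \hull\{\id,\langle\tau\rangle\}$ does force $\rho_0 = \tau$: writing $\langle\rho_0\rangle = \alpha\,\id + \beta\langle\tau\rangle$ and evaluating on two distinct pure states gives $\alpha = 0$, and then trace preservation gives $\beta = 1$. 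So you have correctly established that the \emph{decryption} average $D: \eta \mapsto \sum_k p_K(k)\, U_k^\dagger \eta U_k$ is the constant channel $\langle\tau\rangle$.

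The gap is your final step. Substituting $\eta_0 = U_{k'}\rho U_{k'}^\dagger$ and averaging over $k'$ yields
\[
  \sum_{k'} p_K(k') \sum_k p_K(k)\, U_k^\dagger U_{k'} \rho\, U_{k'}^\dagger U_k
  \;=\; D\bigl(R'(\rho)\bigr) \;=\; \tau,
\]
which is vacuously true: $D$ sends \emph{every} state to $\tau$, so this equation places no constraint whatsoever on $R'(\rho)$, and Eq.~(\ref{eq:ideal-enc-scheme-2}) does not follow from it. The missing idea is duality rather than averaging. The statement $D = \langle\tau\rangle$ means that for every $X$ and every state $\eta$,
\[
  \tr\Bigl(\eta \sum_k p_K(k)\, U_k X U_k^\dagger\Bigr)
  \;=\; \tr\bigl(X\, D(\eta)\bigr)
  \;=\; \tr(X\tau),
\]
where the first equality is just cyclicity of the trace; since this holds for all states $\eta$, it forces $\sum_k p_K(k)\, U_k X U_k^\dagger = \frac{1}{d}(\tr X)\1$ for every $X$, and taking $X = \rho$ a state gives $R'(\rho) = \tau$, i.e.~Eq.~(\ref{eq:ideal-enc-scheme-2}). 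Equivalently: $R'$ is the Heisenberg adjoint of $D$, and the adjoint of $\langle\tau\rangle$ again maps every state to $\tau$. With this one-line replacement your proof closes, and it remains independent of the $2$-design machinery, which is a nice feature. (Also a terminological slip: what you need at the end is not ``unitality'' of $R'$ --- every mixed-unitary channel is trivially unital --- but that $R'$ equals the constant channel $\langle\tau\rangle$.)
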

\begin{proof}
  By Theorem~\ref{thm:TRES-is-2design} a perfect NMES is a unitary
  $2$-design. But then it is automatically a unitary $1$-design,
  meaning that for all $\rho$, $\sum_k p_K(k) U_k \rho U_k^\dagger = \tau$,
  which is precisely Eq.~(\ref{eq:ideal-enc-scheme-2}).
\end{proof}

\begin{theorem}
  \label{thm:lowerbounds}
  Every perfect non-malleable encryption scheme $\{ p_K(k), U_k \}$
  requires at least $(d^2-1)^2+1$ unitaries. Furthermore, every
  $\theta$--NMES as in Eq.~(\ref{eq:tres-3-correct}) with $\theta \leq 1/e$
  satisfies
  \[
    H(p_K) \geq H_2\left(\frac{1}{d^2}\right) + 2\left(1-\frac{1}{d^2}\right)\log(d^2-1)
                - 4\theta\log d - H_2(\theta)
           \geq (4-O(\theta)) \log d,
  \]
  where $H_2(x) = -x\log x - (1-x)\log(1-x)$ is the binary entropy.
\end{theorem}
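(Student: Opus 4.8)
The plan is to attach to the averaged twirl $\cT$ a single density operator and read off both claims from its spectrum. Write $V_k=U_k\ox\cocon{U_k}$ for the $d^2\times d^2$ unitaries in $\cT(\omega)=\sum_k p_K(k)\,V_k^\dg\omega V_k$, let $\operatorname{vec}:\cB(\CC^{d^2})\to\CC^{d^2}\ox\CC^{d^2}$ be column-stacking, and set $\ket{v_k}=\frac1d\operatorname{vec}(V_k^\dg)$ (a unit vector, since $\tr V_kV_k^\dg=d^2$). The object I would study is the normalized Choi--Jamio\l{}kowski state
\[
 \sigma:=(\cT\ox\id)\Phi=\sum_k p_K(k)\proj{v_k},
\]
where $\Phi$ is maximally entangled on $\CC^{d^2}\ox\CC^{d^2}$; it is a convex combination of $N$ pure states, $N$ being the number of unitaries. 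First I would record its ideal value. When $\cT=\cT_{U\ox\cocon U}$ the twirl is the isotropic measure-and-prepare channel, so
\[
 \sigma_{\mathrm{ideal}}=\frac1{d^2}\,\Phi_d\ox\Phi_d+\frac1{d^2(d^2-1)}(\1-\Phi_d)\ox(\1-\Phi_d),
\]
whose nonzero eigenvalues are $\tfrac1{d^2}$ (once) and $\tfrac1{d^2(d^2-1)}$ (with multiplicity $(d^2-1)^2$). In particular $\Rank\sigma_{\mathrm{ideal}}=(d^2-1)^2+1$, and a one-line computation gives $H(\sigma_{\mathrm{ideal}})=H_2(1/d^2)+2(1-1/d^2)\log(d^2-1)$.

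For the cardinality bound I specialize to the perfect case $\sigma=\sigma_{\mathrm{ideal}}$: a convex combination of $N$ rank-one operators has rank at most $N$, so $N\ge\Rank\sigma_{\mathrm{ideal}}=(d^2-1)^2+1$, recovering the bound of Gross \emph{et al.}

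For the entropy bound I would pass to the $N\times N$ Gram matrix $\mathcal G_{jk}=\frac1{d^2}\sqrt{p_K(j)p_K(k)}\,\abs{\tr(U_jU_k^\dg)}^2$ of the vectors $\{\sqrt{p_K(k)}\ket{v_k}\}$. This is positive semidefinite with unit trace, its diagonal is exactly the distribution $p_K$, and its nonzero spectrum coincides with that of $\sigma$. Since the diagonal of a density matrix is majorized by its spectrum (Schur--Horn) and Shannon entropy is Schur-concave,
\[
 H(p_K)\ \ge\ H(\operatorname{spec}\sigma)=H(\sigma).
\]
In the perfect case $H(\sigma)=H(\sigma_{\mathrm{ideal}})$ is exactly the claimed constant. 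For a $\theta$-NMES I would show that \eqref{eq:tres-3-correct} forces $\tfrac12\norm{\sigma-\sigma_{\mathrm{ideal}}}_1\le\theta$ and then apply the Fannes inequality on the $d^4$-dimensional space: for $\theta\le1/e$ the continuity bound is monotone, so
\[
 H(\sigma)\ \ge\ H(\sigma_{\mathrm{ideal}})-\theta\log(d^4-1)-H_2(\theta)\ \ge\ H(\sigma_{\mathrm{ideal}})-4\theta\log d-H_2(\theta),
\]
which is the stated inequality; the final estimate $(4-O(\theta))\log d$ then follows from $\log(d^2-1)=2\log d-O(d^{-2})$.

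The main obstacle is the step $\tfrac12\norm{\sigma-\sigma_{\mathrm{ideal}}}_1\le\theta$. The composable condition \eqref{eq:tres-3-correct} controls each effective channel $\tLambda$ in diamond norm against some $\Theta\in\cC'$ using a plaintext reference of dimension $d$, which at the level of Choi operators only entangles the output register of $\cT$ with a $d$-dimensional system; the Choi state $\sigma$ of the superoperator $\cT$, however, lives on $\CC^{d^4}$ and a priori needs a reference of dimension $d^2$. I expect to close this gap using exactly the freedom stressed after \eqref{eq:ideal-tres-3}: allowing the adversary to apply trace-non-increasing (instrument) operations correlated with an auxiliary system supplies the second $d$-dimensional reference, so that \eqref{eq:tres-3-correct} applies to the full state $\sigma$ and delivers the trace-distance bound with the correct constant. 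That the Fannes correction carries precisely the factor $4\log d=\log d^4$ and requires exactly $\theta\le1/e$ is strong evidence that $\sigma$ on the $d^4$-dimensional space is the intended object.
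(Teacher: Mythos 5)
Your proposal is correct and takes essentially the same route as the paper: the paper also works with the Choi operator $\Omega=(\cT\ox\id)\Phi_{d^2}$ of the discrete twirl, obtains the cardinality bound by comparing its rank (at most $N$, being a mixture of $N$ pure states) with $\Rank\,\Omega_{U\ox\cocon{U}}=(d^2-1)^2+1$, and obtains the entropy bound from $H(p_K)\geq S(\Omega)$ (Schur concavity; your Gram-matrix argument is the standard proof of exactly this fact) followed by Fannes' inequality. The one step you flag as the main obstacle --- deducing $\|\Omega-\Omega_{U\ox\cocon{U}}\|_1\leq\theta$ from Eq.~(\ref{eq:tres-3-correct}) --- is precisely the step the paper asserts without any justification, so your explicit identification of the reference-dimension/adversary-memory mismatch, and your proposed fix via trace-non-increasing operations correlated with an auxiliary system, is if anything more careful than the paper's own treatment of that point.
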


\begin{remark}
  In the light of Theorem~\ref{thm:TRES-is-2design}, the first part amounts to
  a demonstration that $2$-designs have to have at least $(d^2-1)^2+1$ unitaries; 
  this was proved by Gross
  \emph{et al.}~\cite{Gross.Audenaert.ea:Evenlydistributedunitaries:-2007},
  but we give a different, direct, proof below.
  It seems that it is conjectured that in fact the better lower
  bound $d^2(d^2-1)$ holds in general -- which is true for
  so-called ``Clifford twirls'', and tight in some
  dimensions~\cite{Chau:UnconditionallySecureKey-2005,Gross.Audenaert.ea:Evenlydistributedunitaries:-2007}.
\end{remark}

\medskip
\begin{proof}
Consider the Choi-Jamio\l{}kowski operator of $\cT$,
labeling the systems $1$, $2$, $1'$ and $2'$, and with the maximally
entangled state understood between systems $12$ and $1'2'$:
\[
  \Omega_{U\ox\cocon{U}} :=
  (\cT_{U\ox\cocon{U}}^{12} \ox \id^{1'2'})\Phi_{d^2}
                 = \frac{1}{d^2}\Phi_d^{12} \ox \Phi_d^{1'2'}
                   + \frac{1}{d^2(d^2-1)} (\1-\Phi_d)^{12} \ox (\1-\Phi_d)^{1'2'}.
\]
On the other hand, for the first part of the theorem this has to
be equal to
\[
  \Omega :=
  (\cT^{12}\ox\id^{1'2'})\Phi_{d^2}
        = \sum_{k=1}^N p_K(k) (U_k^1 \ox \cocon{U}_k^2 \ox \1^{1'2'})
                                \Phi_{d^2}
                              (U_k^1 \ox \cocon{U}_k^2 \ox \1^{1'2'})^\dagger.
\]
Comparing ranks of the two right hand side expressions reveals immediately
$N \geq (d^2-1)^2+1$.

\medskip
For the entropy statement in the approximate case, we note that by
Eq.~(\ref{eq:tres-3-correct}),
$\| \Omega - \Omega_{U\ox\cocon{U}} \|_1 \leq \theta$, so by
Fannes' inequality~\cite{Fannes:inequality} and Schur concavity of
the entropy~\cite{Wehrl:review},
\[
  H(p_K) \geq S(\Omega) \geq S(\Omega_{U\ox\cocon{U}}) - \theta\log d^4 - H_2(\theta),
\]
and we are done.
\end{proof}

\begin{theorem}[Chau~\cite{Chau:UnconditionallySecureKey-2005}, 
                Gross \emph{et al.}~\cite{Gross.Audenaert.ea:Evenlydistributedunitaries:-2007}]
  \label{thm:d5}
  If $d=p^n$ is a prime power, then 
  there exists a perfect non-malleable encryption scheme with
  $d^5-d^3$ unitaries, meaning that the key length is $\leq 5 \log d$.
  In fact, such a scheme is obtained as the uniform ensemble
  over a particular subgroup of the Clifford group (i.e., the
  normalizer) of the $n$-th power Heisenberg-Weyl (aka generalised Pauli) 
  group ${\cal P}_p^{\ox n}$, where
  ${\cal P}_p$ is the group generated by the discrete Weyl operators
  \[
    X_p = \sum_{j=0}^{p-1} \ket{j\!+\!1 \mod p}\!\bra{j},\quad
    Z_p = \sum_{k=0}^{p-1} e^{2\pi i k/p}\proj{k}.
  \]
\end{theorem}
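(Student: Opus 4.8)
The plan is to invoke Theorem~\ref{thm:TRES-is-2design}: it suffices to exhibit a subgroup $G$ of the Clifford group, of order $d^5-d^3$ modulo global phases, whose uniform twirl $\cT$ coincides with the Haar twirl $\cT_{U\ox\cocon U}$. I would take $G$ to be generated by the full Heisenberg-Weyl group $\cP:=\cP_p^{\ox n}$ together with Clifford unitaries realizing the $\FF_d$-linear symplectic transformations. Concretely, choosing an $\FF_p$-basis of $\FF_d$ identifies $\cP$ modulo phases with the phase space $\FF_d^2$, so that the Pauli commutator becomes $\chi\circ\omega$ for a fixed nontrivial additive character $\chi$ and a nondegenerate symplectic form $\omega$ on $\FF_d^2$ (obtained from the $\FF_p$-form via $\operatorname{Tr}_{\FF_d/\FF_p}$). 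The $\FF_d$-linear symplectic maps then form the subgroup $\mathrm{Sp}(2,\FF_d)=\mathrm{SL}(2,\FF_d)\hookrightarrow\mathrm{Sp}(2n,\FF_p)$ of the symplectic group onto which the Clifford group surjects; lifting each element to a Clifford unitary (the Weil/metaplectic representation) yields, modulo phases, a group $G\cong\cP\rtimes\mathrm{SL}(2,\FF_d)$ of order $d^2\cdot d(d^2-1)=d^5-d^3$, giving key length $\leq 5\log d$.

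It then remains to show $G$ is a unitary $2$-design. Since global phases drop out of the conjugation $X\mapsto(g\ox\cocon g)^\dagger X(g\ox\cocon g)$, the ensemble is well defined independently of phase representatives, and the image of $\cT$ is precisely the commutant of $\{g\ox\cocon g:g\in G\}$. As $G\subseteq\rU(d)$, this commutant always contains the two-dimensional Haar commutant $\operatorname{span}\{\Phi_d,\1-\Phi_d\}$, and $G$ is a design exactly when equality holds. It is cleanest to pass to the equivalent $U\ox U$ picture (partial transpose, as in the text) and twirl in two stages. First I expand an arbitrary operator on $\cH\ox\cH$ in the Weyl basis $\{P\ox Q\}$ and average over $\cP$: since $R^\dagger PR=\chi(\omega(r,p))P$, conjugation by $R\ox R$ multiplies $P\ox Q$ by $\chi(\omega(r,p+q))$, and averaging over all $r\in\FF_d^2$ annihilates every term with $p+q\neq0$. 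Hence the Pauli twirl projects onto the $d^2$-dimensional space $\operatorname{span}\{P\ox P^{-1}\}$ indexed by $p\in\FF_d^2$.

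Second, I average over $\mathrm{SL}(2,\FF_d)$, whose conjugation action permutes the $P\ox P^{-1}$ according to the linear action $p\mapsto Sp$ on phase space; the surviving coefficients are those constant on $\mathrm{SL}(2,\FF_d)$-orbits of $\FF_d^2$. Here is the crux: $\mathrm{SL}(2,\FF_d)$ acts transitively on the nonzero vectors, so there are exactly two orbits, $\{0\}$ and $\FF_d^2\setminus\{0\}$. The image of the full twirl is therefore two-dimensional, spanned by the $p=0$ term $\1$ and by $\sum_{p\neq0}P\ox P^{-1}=dF-\1$; via the identity $\sum_p P\ox P^{-1}=dF$ this is exactly $\operatorname{span}\{\1,F\}$, matching the Haar commutant. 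Thus $\cT=\cT_{U\ox U}$ (equivalently $\cT_{U\ox\cocon U}$ after undoing the transpose), so $G$ is a unitary $2$-design and, by Theorem~\ref{thm:TRES-is-2design}, a perfect NMES with $d^5-d^3$ unitaries.

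I expect the main obstacle to be bookkeeping rather than conceptual. One must fix the identification $\cP_p^{\ox n}/\text{phases}\cong\FF_d^2$ compatibly with the symplectic form, so that $\mathrm{SL}(2,\FF_d)$ genuinely sits inside the Clifford normalizer and acts on phase space as claimed, and verify the existence of honest unitary Clifford representatives closing to a group modulo phases. The two-stage twirl itself is routine given the Weyl commutation relations; the only real content is the transitivity of $\mathrm{SL}(2,\FF_d)$ on $\FF_d^2\setminus\{0\}$, which is what pins the commutant dimension at exactly two and hence forces the design property.
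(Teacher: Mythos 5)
Your proposal is correct, but note that the paper itself offers no argument for this theorem at all: its ``proof'' is purely a pointer to Chau, to Gross \emph{et al.}, and to Grassl's presentation. What you have written is therefore a genuine reconstruction of the cited construction together with a proof of its design property. The group you build --- the Weyl operators $\FF_d^2$ extended by Clifford lifts of $\operatorname{SL}(2,\FF_d)$, of order $d^2\cdot d(d^2-1)=d^5-d^3$ modulo phases --- is exactly the ``Jacobi subgroup'' of the Clifford group that the paper names in its Discussion section, and your two-stage twirl is the standard route to the design property in the cited literature: the Pauli average projects onto $\operatorname{span}\{P\ox P^{-1}\}$, the $\operatorname{SL}(2,\FF_d)$ average forces coefficients constant on the two orbits $\{0\}$ and $\FF_d^2\setminus\{0\}$ (transitivity on nonzero vectors is indeed the crux), and $\sum_p P\ox P^{-1}=dF$ identifies the image with $\operatorname{span}\{\1,F\}$; since a finite-group twirl and the Haar twirl are both Hilbert--Schmidt--orthogonal projections onto their fixed-point algebras, equality of images forces equality of the maps. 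Two small repairs are worth making. First, Theorem~\ref{thm:TRES-is-2design} as stated gives only the implication NMES $\Rightarrow$ 2-design; the converse you need is the equivalence established in the derivation immediately preceding it (the NMES condition \emph{is} $\cT=\cT_{U\ox\cocon{U}}$), so you should cite that computation rather than the theorem. Second, you do not actually need the semidirect-product structure $\cP\rtimes\operatorname{SL}(2,\FF_d)$, whose splitting is delicate for $p=2$: all the argument uses is that the generated group, modulo phases, decomposes into $|\operatorname{SL}(2,\FF_d)|$ cosets of the image of $\cP$, so that the uniform average factorizes into your two stages; the order count $d^5-d^3$ and the phase cancellation built into $P\ox P^{-1}$ are unaffected by whether the extension splits.
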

\begin{proof}
  Apart from Chau~\cite{Chau:UnconditionallySecureKey-2005} see
  Gross \emph{et al.}~\cite{Gross.Audenaert.ea:Evenlydistributedunitaries:-2007},
  as well as the crisp presentation of Grassl~\cite{Grassl:6-SIC}.
\end{proof}

\medskip
\begin{remark}
  We note that in even prime power dimension, the cardinality
  of the subgroup can be reduced to $(d^5-d^3)/8$.
  Furthermore, Chau~\cite{Chau:UnconditionallySecureKey-2005} showed
  that for several small dimensions the minimum $d^4-d^2$ is attainable;
  see also Gross \emph{et al.}~\cite{Gross.Audenaert.ea:Evenlydistributedunitaries:-2007}
  for another example of $2(d^4-d^2)$.
\end{remark}

\begin{theorem}
  \label{thm:approx-2-design}
  For $0< \theta \leq 1/2$ there exists a $\theta$-NMES
  with $O(\theta^{-2}d^4\log d)$ unitaries, i.e.~with key
  requirement of $4 \log d + \log\log d + O\left(\log\frac{1}{\theta}\right)$
  bits.
  In fact, Eq.~(\ref{eq:tres-3-correct}) holds in the stronger form
  \begin{equation}
    \label{eq:tres-3-strong}
    (1-\theta) \Theta \leq \tLambda \leq (1+\theta) \Theta. \tag{$3^*$}
  \end{equation}
\end{theorem}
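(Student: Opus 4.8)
The plan is to prove the strong multiplicative bound~(\ref{eq:tres-3-strong}), from which the two norm conditions~(\ref{eq:tres-3-naive}) and~(\ref{eq:tres-3-correct}) follow at once, by a random-sampling argument. First I would recast~(\ref{eq:tres-3-strong}) as a single operator inequality. Recall from the derivation preceding Theorem~\ref{thm:TRES-is-2design} that, for an ensemble $\{p_K(k),U_k\}$, the effective channel $\tLambda$ has Choi--Jamio\l{}kowski operator $\widetilde\omega=\cT(\omega)$, while the target channel $\Theta$ has Choi operator $\cT_{U\ox\cocon{U}}(\omega)$, where $\omega=J_\Lambda$ and $\cT(\cdot)=\sum_k p_K(k)\,(U_k\ox\cocon{U_k})^\dg(\cdot)(U_k\ox\cocon{U_k})$. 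Reading~(\ref{eq:tres-3-strong}) as an ordering of completely positive maps and passing through the Choi--Jamio\l{}kowski isomorphism, the requirement ``$(1-\theta)\Theta\le\tLambda\le(1+\theta)\Theta$ for every adversary $\Lambda$ simultaneously'' becomes a single sandwich inequality for the twirl superoperator $\cT$, which at the level of its Choi operator reads
\[
  (1-\theta)A \le \Omega \le (1+\theta)A,
\]
with $\Omega=(\cT\ox\id)\Phi_{d^2}$ and $A:=\Omega_{U\ox\cocon{U}}$ the operator computed in the proof of Theorem~\ref{thm:lowerbounds}. Since that target twirl maps into the isotropic cone $\cI$, the resulting $\Theta$ lies in $\cC'$, so the scheme will in fact be strictly non-malleable.

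Next I would set up the probabilistic model: draw $U_1,\dots,U_N$ independently from the Haar measure and let $p_K$ be uniform, so that $\Omega=\frac{1}{N}\sum_{k=1}^N X_k$ with $X_k=(U_k\ox\cocon{U_k}\ox\1)\,\Phi_{d^2}\,(U_k\ox\cocon{U_k}\ox\1)^\dg$. Each $X_k$ is a rank-one projector, hence $0\le X_k\le\1$, and its Haar expectation is precisely $A$. From the explicit form of $\Omega_{U\ox\cocon{U}}$ in the proof of Theorem~\ref{thm:lowerbounds}, $A$ has eigenvalue $\frac{1}{d^2}$ on $\Phi_d\ox\Phi_d$ and $\frac{1}{d^2(d^2-1)}$ on $(\1-\Phi_d)\ox(\1-\Phi_d)$, so it is supported on a subspace of dimension $D'=(d^2-1)^2+1$, and every $X_k$ lies in that support almost surely. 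Restricted to $\operatorname{supp}(A)$ we thus have $A\ge a\,\1$ with $a=\frac{1}{d^2(d^2-1)}$, while $\Omega$ automatically vanishes on $\ker A$; hence the full sandwich reduces to the same inequality on the $D'$-dimensional support.

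The core step is then a single application of the Ahlswede--Winter operator Chernoff bound to the i.i.d.\ operators $X_k$ on $\operatorname{supp}(A)$: for $0<\theta\le\frac12$,
\[
  \Pr\Bigl[\,(1-\theta)A\le\tfrac{1}{N}\textstyle\sum_{k} X_k\le(1+\theta)A\ \text{fails}\,\Bigr]
    \ \le\ 2D'\exp\!\Bigl(-\tfrac{N a\,\theta^2}{2\ln 2}\Bigr).
\]
For the right-hand side to fall below $1$ --- guaranteeing a good choice of $\{U_k\}$ --- it is enough that $N\gtrsim a^{-1}\theta^{-2}\log D'$. Since $a^{-1}=d^2(d^2-1)=O(d^4)$ and $\log D'=O(\log d)$, this is $N=O(\theta^{-2}d^4\log d)$, the claimed cardinality; taking logarithms gives the key requirement $\log N=4\log d+\log\log d+O(\log\frac{1}{\theta})$. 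Undoing the Choi--Jamio\l{}kowski correspondence turns the surviving sandwich into~(\ref{eq:tres-3-strong}), and hence into~(\ref{eq:tres-3-naive}) and~(\ref{eq:tres-3-correct}).

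I expect the main obstacle to be the two equivalences that bracket the probabilistic step rather than the concentration bound itself. The first is the reduction of the universally-quantified channel condition~(\ref{eq:tres-3-strong}) to the single operator sandwich for $\Omega$: this relies on the complete-positivity ordering of the twirls and on the fact that $\Omega$ and each $X_k$ are automatically supported on $\operatorname{supp}(A)$, so that the kernel of $A$ poses no difficulty. The second is correctly identifying the least nonzero eigenvalue $a=\frac{1}{d^2(d^2-1)}$ of $A$, since it is $a^{-1}=O(d^4)$ that sets the sample complexity and thus the factor $d^4$ in the key length. Once both are settled, the operator Chernoff bound is used as a black box and the counting of key bits is routine.
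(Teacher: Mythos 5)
Your proof is correct and follows essentially the same route as the paper's: sample $U_1,\dots,U_N$ i.i.d.\ from an exact unitary 2-design, reduce the channel sandwich~(\ref{eq:tres-3-strong}) to the operator sandwich $(1-\theta)\Omega_{U\ox\cocon{U}} \leq \Omega \leq (1+\theta)\Omega_{U\ox\cocon{U}}$ for the Choi--Jamio\l{}kowski operators, and conclude with the Ahlswede--Winter operator Chernoff bound. The only difference is that you make explicit two points the paper leaves implicit --- that all $X_k$ are supported on $\operatorname{supp}(\Omega_{U\ox\cocon{U}})$, and that the least nonzero eigenvalue $\frac{1}{d^2(d^2-1)}$ of that operator is what puts the $d^4$ into the exponent of the concentration bound --- which is a clarification of the same argument, not a deviation from it.
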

\begin{proof}
Start from any exact unitary $2$-design, such as the unitary group with
Haar measure, or the Clifford group or one of its admissible subgroups.
We shall select $U_1,\ldots U_N$ independently at random from that
chosen 2-design, and show that Eq.~(\ref{eq:tres-3-strong}) is true
with high probability as soon as $N \gg \theta^{-2}d^4\log d$; which
of course implies that there exist a particular selection of an ensemble
$\{ 1/N, U_k \}_{k=1}^N$ satisfying (\ref{eq:tres-3-strong}).

In fact, it is sufficient to show that for
$\cT(\omega) = \frac{1}{N} \sum_{k=1}^N (U_k\ox\cocon{U}_k) \omega (U_k\ox\cocon{U}_k)^\dagger$,
\[
  (1-\theta)\cT_{U\ox\cocon{U}} \leq \cT \leq (1+\theta)\cT_{U\ox\cocon{U}},
\]
which in turn is equivalent to the corresponding statement for the
Choi-Jamio\l{}kowski states -- compare Eq.~(\ref{eq:CJ-inverse}):
\[
  (1-\theta)\Omega_{U\ox\cocon{U}} \leq \Omega \leq (1+\theta)\Omega_{U\ox\cocon{U}},
\]
where
\begin{align*}
  \Omega_{U\ox\cocon{U}} &= (\cT_{U\ox\cocon{U}}^{12} \ox \id^{1'2'})\Phi_{d^2}
                          = \frac{1}{d^2}\Phi_d^{12} \ox \Phi_d^{1'2'}
                            + \frac{1}{d^2(d^2-1)} (\1-\Phi_d)^{12} \ox (\1-\Phi_d)^{1'2'}, \\
  \Omega                 &= (\cT^{12}\ox\id^{1'2'})\Phi_{d^2}
                          = \frac{1}{N} \sum_{k=1}^N (U_k^1 \ox \cocon{U}_k^2 \ox \1^{1'2'})
                                                       \Phi_{d^2}
                                                     (U_k^1 \ox \cocon{U}_k^2 \ox \1^{1'2'})^\dagger.
\end{align*}

Now $\Omega$ is a random variable, in fact an average of $N$ independent,
identically distributed terms
\(
 X_k := (U_k^1 \ox \cocon{U}_k^2 \ox \1^{1'2'}) \Phi_{d^2} (U_k^1 \ox \cocon{U}_k^2 \ox \1^{1'2'})^\dagger
\)
with expectation $\EE X_k = \EE \Omega = \Omega_{U\ox\cocon{U}}$. All $X_k$ are
bounded between $0$ and $\1$, so the technical result from~\cite{AhlswedeWinter-ID}
applies, the \emph{operator Chernoff bound}, yielding (with a universal
constant $c>0$)
\[
  \Pr\bigl\{ (1-\theta)\Omega_{U\ox\cocon{U}} \leq \Omega \leq (1+\theta)\Omega_{U\ox\cocon{U}} \bigr\}
       \geq 1 - 2d^4 e^{-c \theta^2 N/d^4},
\]
which implies the claim.
\end{proof}

\section{Discussion}
We have introduced the cryptographic primitive of a non-malleable
quantum state encryption scheme. While many questions remain open,
we have shown that every such scheme based on random unitaries
is a unitary 2-design, showing in particular that every such
scheme must use $4\log d$ bits of key, as opposed to the well-known
$2\log d$ necessary and sufficient for quantum state
encryption~\cite{Ambainis+Mosca...-Priva_quant_chann:2000}.

This situation essentially persists even if we relax the non-malleability
to being approximate. On the other hand, there exists an exact
construction based on the Jacobi subgroup of the Clifford group
in dimension $d$, which requires $5\log d$ bits of key, and
we show a new randomized construction requiring only
$(4+o(1))\log d$ bits of key. We leave open the question of
finding an explicit description of such a scheme, as well
as that of finding an exact unitary 2-design with only $O(d^4)$
elements.

What we also leave open is the perhaps more pressing problem of
relaxing the condition that encryption is done by unitaries. Giving
up this restriction results in an advantage in key size, see the
work of Barnum \emph{et al.}~\cite{Barnum+Crepeau...-Authenticatio_of_q_mes:2002}.
More precisely, these authors show how using $2n+O(s)$ bits of secret
key to encrypt $n-s$ qubits into $n$ qubits results in a $\theta$-NMES
with $\theta = 2^{-O(s)}$. In our setting this can be understood
as only using $d_0 < d$ of the Hilbert space dimensions for quantum
information. Then, to transmit a state in the $d_0$-dimensional space
${\cal H}_0 \subset {\cal H}$, first $s$ key bits are used to
specify a unitary rotation $V_\ell$ of ${\cal H}$, and then the familiar
further $2\log d$ bits of key are used to encrypt ${\cal H}$.
If the $V_\ell$ ($\ell=1,\ldots,2^s$) are ``sufficiently random''
and $2^s \geq d/d_0$ then it can be shown that while the
adversary can implement certain effective channels on ${\cal H}$,
for most $\ell$ this will map the state significantly
outside of ${\cal H}_\ell := V_\ell {\cal H}_0$.

\acknowledgments
JB and AW thank the Perimeter Institute for Theoretical Physics for
its hospitality during a visit in 2006, where the present work was conceived.

%
JB acknowledges support of the Hertha Firnberg ARC stipend program, and
grant projects GA\v{C}R 201/06/P338, GA\v{C}R 201/07/0603 and MSM0021622419.
AW received support from the European Commission (project ``QAP''), from
the U.K.~EPSRC through the ``QIP IRC'' and an Advanced Research Fellowship, and
through a Wolfson Research Merit Award of the Royal Society.
The Centre for Quantum Technologies is funded by the Singapore Ministry of Education
and the National Research Foundation as part of the Research Centres of Excellence programme.

\end{document}